\documentclass[11pt, a4paper, copyright, gdm]{google}
\usepackage[authoryear, compress, round]{natbib}

\usepackage{mathtools}
\usepackage{physics}
\usepackage{cleveref}
\usepackage{xspace}
\usepackage{amsthm}
\usepackage{tcolorbox}
\usepackage{caption}
\newcommand{\method}{{AlphaEvolve}\xspace}

\theoremstyle{definition}

\theoremstyle{remark}

\theoremstyle{plain}

\newtheorem{lemma}{Lemma}
\newtheorem{theorem}{Theorem}

\DeclareMathOperator{\Split}{Split}
\newcommand{\defn}[1]{{\emph{{#1}}}}

\newcommand{\Shapes}{\mathcal{S}}
\DeclarePairedDelimiterX\mysetbase[2]{\lbrace}{\rbrace}{#1\;\delimsize\vert\;#2}
\NewDocumentCommand{\myset}{sO{}m m}{%
  \IfBooleanTF{#1}% Check if starred version
    {\mysetbase*{#3}{#4}}% Automatic sizing
    {\mysetbase[#2]{#3}{#4}}% Manual sizing
}
\newcommand{\Z}{\mathbb{Z}}
\newcommand{\defeq}{\coloneqq}
\DeclareMathOperator{\supp}{supp}
\DeclarePairedDelimiter{\bk}{(}{)}
\DeclarePairedDelimiter{\BK}{\{}{\}}
\newcommand{\numberthis}{\addtocounter{equation}{1}\tag{\theequation}}
\renewcommand{\ip}{\text{\ttfamily +}}  % coordinate > 0
\newcommand{\is}{\text{\ttfamily *}}  % coordinate unrestricted
\newcommand{\il}{\text{\ttfamily <}}  % coordinate below its maximum
\newcommand{\eps}{\varepsilon}
\DeclarePairedDelimiter{\absolute}{\lvert}{\rvert}
\DeclarePairedDelimiter{\angbk}{\langle}{\rangle}
\uselogo{} 

\title{Improving the matrix multiplication exponent with modern optimization and AlphaEvolve}

\correspondingauthor{edupont@google.com}

\author[*,1]{Emilien Dupont}
\author[*,1]{Marvin Eisenberger}
\author[*,1]{Borislav Kozlovskii}
\author[*,1]{Abbas Mehrabian}
\author[*,1]{Francisco J.\ R.\ Ruiz}
\author[*,1]{Abigail See}
\author[*,2]{Renfei Zhou}
\author[3]{Josh Alman}
\author[4]{Virginia Vassilevska Williams}
\author[1]{Matej Balog}

\affil[*]{Equal contribution in alphabetical order}
\affil[1]{\thepa{}{}}
\affil[2]{Carnegie Mellon University}
\affil[3]{Columbia University}
\affil[4]{MIT}

\begin{abstract}
The current best bounds on the matrix multiplication exponent $\omega$ are obtained through a refinement of the laser method called combination loss analysis \citep{duan2022faster, williams2024new, alman2025more}. In this note, we address the optimization problem at the core of this approach and propose several improvements. First, we reformulate the optimization problem allowing us to solve it in a larger setting than was previously possible. Second, we leverage recent advances in machine learning to design a new optimization algorithm for this problem. Finally, we refine the resulting optimization algorithm with \method. Our combined approach yields an upper bound of $\omega < 2.371177$, improving the previous best bound of $2.371339$.
\end{abstract}

\begin{document}

\maketitle

\section{Introduction}

From accelerating machine learning computations to enabling realistic computer graphics, matrix multiplication is a fundamental operation underpinning critical applications in computer science. Despite its prominence, the computational complexity of matrix multiplication---the number of arithmetic operations needed to multiply large matrices---is unknown, and determining it is a major open question in theoretical computer science \citep{blaser2013fast}.

The pioneering work of Strassen \citep{strassen1969gaussian} showed that two $n\times n$ matrices can be multiplied in sub-cubic time---specifically, $\mathcal{O}(n^{\omega + o(1)})$ operations for $\omega < 2.81$---spurring a line of work attempting to further reduce the complexity exponent $\omega$ \citep{pan1978strassen,bini1979order,schonhage1981partial,romani1982some,coppersmith1981asymptotic,strassen1986asymptotic,coppersmith1990matrix,stothers2010complexity,williams2012multiplying,legall2014algebraic,alman2024refined,duan2022faster,williams2024new,alman2025more}. All improvements in the past 40 years rely on the \emph{laser method}, a mathematical technique to indirectly design matrix multiplication algorithms. The current best bound is achieved by a refinement of the laser method called \emph{combination loss analysis} \citep{duan2022faster}, a technique that requires solving a non-convex optimization problem as part of the computer-assisted proof, and yields $\omega < 2.371339$ \citep{alman2025more}.

Here, we improve the bound to $\omega < 2.371\mathbf{177}$ using a two-step approach. First, we leverage recent advances in machine learning and adjacent areas to address the non-convex optimization problem of combination loss analysis using a gradient descent approach; this alone improves the previous state-of-the-art (SOTA) bound by $\approx 0.97 \times 10^{-4}$. Second, we use \method \citep{novikov2025alphaevolve} to improve our optimization algorithm; this raises the improvement over the SOTA to $\approx 1.62 \times 10^{-4}$.

\begin{table}[b]
\centering
\begin{tabular}{l|l}
\cite{duan2022faster} & $2.371\mathbf{866}$   \\ \hline
\cite{williams2024new} & $2.371\mathbf{552}$  \\ \hline
\cite{alman2025more} & $2.371\mathbf{339}$  \\ \hline
This note & $2.371\mathbf{177}$ \\
\end{tabular}
\caption{Recent improvements to $\omega$.}
% \label{tab:omega_values}
\end{table}

The optimization problem at the core of combination loss analysis is formulated in~\cite{alman2025more}, where it was also shown that any feasible solution provides an upper bound on $\omega$. To achieve our new bound, we target a slightly different optimization problem. Specifically, combination loss analysis has a parameter---the \emph{maximum recursion level}, denoted by $\ell^*$---that introduces a trade-off between the complexity of the optimization (which grows doubly exponentially in $\ell^*$) and the best possible bound on $\omega$ it can achieve. The previous SOTA bound was found with $\ell^*=3$ \citep{alman2025more}. In contrast, our gradient-based optimization, implemented in Jax \citep{jax2018github}, allows for hardware parallelization and can handle $\ell^*=4$ (the number of optimizable parameters increases from approximately 25k to 7 million when moving from $\ell^* = 3$ to $\ell^* = 4$).

In this note, we describe the full optimization problem from \cite{alman2025more} in detail, explain how we numerically solved it and applied \method to it, and finally show how we rigorously certified the resulting omega bound.

\section{Optimization problem}

The high-level structure of the optimization problem can be captured by a rooted tree, where every node is associated with a collection of optimizable parameters.
Integers $q \ge 1$ and $\ell^* \ge 2$ are fixed as hyperparameters.
Intuitively, the tree structure describes a recursive way to decompose the tensor
$CW_q^{\otimes 2^{\ell^*}}$ into smaller tensors, where $CW_q$ is the Coppersmith-Winograd tensor~\citep{coppersmith1990matrix}; see \cite{alman2025more} for more details about this correspondence. We start by defining the tree structure, and will then define the optimizable parameters associated to its nodes.

\subsection{Tree structure}
\label{subsec:recursive-tree-structure}
Denote $[k]\defeq\{1,\dots,k\}$.
Each non-root node $T$ in the tree is associated with a \emph{level}, a \emph{shape} $s_T^{}$, and a \emph{region} $r_T^{}$, defined as follows.
\begin{itemize}
  \item The \defn{level} of a non-root node is a positive integer $\ell \in \{2,3,\dots,\ell^*\}$, describing its depth in the tree; higher is closer to the root. Direct children of the root node have level $\ell^*$, which is a hyperparameter fixed in advance.
  \item A level-$\ell$ \defn{shape} is a triple $s = (s_X^{}, s_Y^{}, s_Z^{})$, where $s_X^{}, s_Y^{}, s_Z^{}$ are non-negative integers summing to $2^\ell$. We use
        \[
          \Shapes_\ell^{} \coloneqq \myset*{(i,j,k) \in \Z_{\ge 0}^3}{i+j+k=2^\ell}
        \]
        to denote the set of all level-$\ell$ shapes. We use \defn{dimensions} $X, Y, Z$ to refer to the indices of the three coordinates in a shape.
  \item There are six \defn{regions}, indexed by an integer $r \in [6]$. Each region is associated with $\pi_r$, the $r$-th permutation over symbols $\{X, Y, Z\}$ in lexicographic order.
\end{itemize}
We call a non-root node $T$ a \defn{positive-shape node} if \emph{all} coordinates in its shape $s_T^{}$ are positive. Otherwise it is a \defn{zero-shape node}. The root node is denoted by $G$ and does not have the aforementioned attributes---level, shape, and region.

Next, we define the tree structure by describing the child nodes for different types of nodes.

\begin{itemize}
  \item \textbf{Root:} For every region $r \in [6]$ and level-$\ell^*$ shape $s \in \Shapes_{\ell^*}^{}$, the root $G$ has a child at level $\ell^*$ with shape $s$ and region $r$, denoted as $G[s, r]$.
  \item \textbf{Positive-shape node:} For every level-$\ell$ shape $s = (s_X^{}, s_Y^{}, s_Z^{})$, we define
        \[
          \Split(s) \defeq \myset*{u \in \Shapes_{\ell-1}}{0 \le u_X^{} \le s_X^{}, \; 0 \le u_Y^{} \le s_Y^{}, \; 0 \le u_Z^{} \le s_Z^{}}.
        \]
        Fixing a positive-shape node $T$ at level $\ell \ge 3$, for every region $r \in [6]$ and level-$(\ell - 1)$ shape $u \in \Split(s_T^{})$, the node $T$ has a child at level $\ell - 1$ with shape $u$ and region $r$, denoted as $T[u, r]$. Note that the region index $r$ of the child can be different from that of $T$ itself.
  \item \textbf{Zero-shape nodes} and \textbf{level-2 positive-shape nodes} do not have child nodes. We call them the \emph{leaves}.
\end{itemize}

\subsection{Optimizable parameters}
\label{subsec:optimizable-parameters}
Next we list the free variables of the optimization problem, grouped by the associated node on the tree. For any finite set $D$, we use $\Delta(D)$ to denote the simplex of probability distributions on $D$:
\[
\Delta(D) \defeq \myset*{p: D \rightarrow [0,1]}{\sum_{x\in D} p(x) = 1}
\]

\paragraph{Root node.}
The optimizable parameters associated with the root node $G$ are:
\begin{itemize}
  \item A distribution $A_G^{} = (A_G^{(1)}, \dots, A_G^{(6)}) \in \Delta([6])$ over the six regions;
  \item For $r \in [6]$, a distribution $\alpha_G^{(r)} \in \Delta(\Shapes_{\ell^*}^{})$ over all level-$\ell^*$ shapes.
\end{itemize}

\paragraph{Positive-shape node.}
Each level-$\ell$ positive-shape node $T$ with $\ell \ge 3$ is associated with the following parameters:
\begin{itemize}
  \item A distribution $A_T^{}=(A_T^{(1)},\ldots,A_T^{(6)}) \in \Delta([6])$ over the six regions;
  \item For $r \in [6]$, a distribution $\alpha_T^{(r)}\in\Delta(\Split(s_T^{}))$, where $s_T^{}$ is the shape of $T$.
\end{itemize}

\paragraph{Zero-shape node.}
For a level $\ell$ and an integer $a$ with $0 \le a \le 2^\ell$, we define
\[
  \mathcal C_{\ell,a}^{}
  \coloneqq
  \myset*{L\in\{0,1,2\}^{2^{\ell-1}}}{\sum_{p=1}^{2^{\ell-1}} L_p=a}.
\]
This is the set of length-$2^{\ell-1}$ vectors over $\{0,1,2\}$ whose entries sum to $a$.  A distribution over $\mathcal C_{\ell, a}^{}$ is called a level-$\ell$ \defn{complete split distribution} of $a$.

For every level-$\ell$ zero-shape node $T$, let $W \in \{X, Y, Z\}$ be the first dimension where $s_{T,W}^{}$ is nonzero. $T$ is then associated with a complete split distribution $\beta_{T,W}^{} \in \Delta(\mathcal C_{\ell, s_{T,W}^{}}^{})$.

\paragraph{Level-2 node.}
The only remaining  nodes are level-2 nodes with shapes $(1,1,2)$, $(1,2,1)$, or $(2,1,1)$, which are the only valid strictly positive shapes for level 2, since their coordinates must sum to $2^2=4$. Each such node is associated with a scalar $\mu_{T}^{} \in [0, 1/2]$.

\subsection{Derived quantities}
\label{subsec:derived-quantities}

With the free variables fixed, all remaining quantities can be computed deterministically.
In the following, all logarithms and entropies are in base two.  For a distribution $\rho$ with finite support, we denote its entropy by
\[
  H(\rho) \defeq -\sum_{x\in\supp(\rho)}\rho(x) \log \rho(x).
\]
We also use $H(p_1, p_2, \dots, p_k)$ to denote the entropy of a distribution over $[k]$ with probability masses $p_1, \dots, p_k$.
If $D$ is a set of shapes and $\rho \in \Delta(D)$, let $\rho_W^{}$ denote the marginal of $\rho$ in coordinate $W$, namely, $\rho_W(w) \defeq \displaystyle\sum_{a \in D : a_W = w} \rho(a)$.
Define
\[
  H_D^{\max}(\rho)
  \defeq
  \sup_{\substack{\rho'\in\Delta(D)\\
      \rho'_W=\rho_W^{}\text{ for }W\in\{X,Y,Z\}}}
  H(\rho'),
  \numberthis \label{eq:def-hmax}
\]
as well as the penalty notion
\[
  P_{D}^{}\bk*{\rho}
  \defeq
  H_{D}^{\max}\bk*{\rho}
  - H\bk*{\rho}.
\]

\paragraph{Masses.}
Each non-zero-shape node $T$ of the tree has a real number $m_T^{} \in [0, 1]$ associated with it, called its \defn{mass}. The masses of non-root nodes are computed top-down as follows:
\begin{itemize}
  \item The root $G$ has $m_G^{} = 1$.
  \item For the children of the root, we set
        \[
          m_{G[s,r]}^{} \defeq A_G^{(r)} \cdot \alpha_G^{(r)}(s),
          \qquad \forall\, s \in \Shapes_{\ell^*}^{},\ r \in [6].
        \]
  \item For any positive-shape node $T$ of level $\ell \ge 3$, we set
        \[
          m_{T[u,r]}^{}=m_T^{} \cdot A_T^{(r)} \cdot \left(\alpha_T^{(r)}(u)+\alpha_T^{(r)}(s_T^{} \!-\! u)\right),
          \qquad \forall\, u \in \Split(s_T^{}),\ r \in [6].
        \]
\end{itemize}

\paragraph{Complete split distributions.}
Every level-$\ell$ non-root node $T$ carries, for each dimension $W \in \{X,Y,Z\}$, a complete split distribution $\beta_{T,W}^{} \in \Delta\bk[\big]{\mathcal C_{\ell, s_{T,W}^{}}^{}}$.  These distributions are calculated as follows.

Let $\vec{0} \defeq (0, 0, \dots, 0)$ and $\vec{2} \defeq (2, 2, \dots, 2)$ denote vectors of length $2^{\ell - 1}$.
For a zero-shape node $T$, let $W_0^{} \in \BK{X, Y, Z}$ be the first zero coordinate of $s_T^{}$, $W_1^{}$ be the first nonzero coordinate of $s_T^{}$, and $W_2^{}$ be the other coordinate. $\beta_{T, W_0}^{}$ is the point mass distribution at the length-$2^{\ell - 1}$ vector $\vec{0}$; $\beta_{T, W_1}^{}$ was defined as optimizable parameters in \cref{subsec:optimizable-parameters}; $\beta_{T, W_2}^{} \defeq \beta^{\vee}_{T, W_1}$, where for any complete split distribution $\beta$, we define
\[
  \beta^{\vee}(\vec{2} \!-\! L) \defeq \beta(L), \qquad \forall\, L \in {\supp(\beta)}.
\]
(Notice, in particular, that if the components of $L$ sum to $s_{T, W_1}$, then the components of $\vec{2} - L$ sum to $2 \cdot 2^{\ell-1} - s_{T, W_1} = s_{T, W_2}$, keeping the mapping validly within the correct domain.)

For a positive-shape node $T$ at level $\ell \ge 3$, we define
\[
  \beta_{T,W}^{(r)}
  \defeq \sum_{u\in\Split(s_T^{})}\alpha_T^{(r)}(u) \cdot
  \bigl(\beta_{T[u,r],W}^{}\times\beta_{T[s_T^{}-u,r],W}^{}\bigr),
  \qquad
  \forall\, r \in [6],
\]
where $\times$ denotes the Cartesian product of complete split distributions; that is, if $L_{\text{left}}$ and $L_{\text{right}}$ are two sequences of length $2^{\ell-2}$ in supports of 
$\beta_{\text{left}} \defeq \beta_{T[u,r],W}$
and $\beta_{\text{right}}\defeq\beta_{T[s_T^{}-u,r],W}$, respectively,
then we form a new sequence $L$, of length $2^{\ell-1}$, by concatenating $L_{\text{left}}$ and $L_{\text{right}}$,
and letting
\[
  (\beta_{\text{left}} \times \beta_{\text{right}})(L) \defeq \beta_{\text{left}}(L_{\text{left}}) \cdot \beta_{\text{right}}(L_{\text{right}}).
\]
Then, the complete split distributions of $T$ are calculated by
\[\beta_{T,W}^{} \defeq \sum_{r=1}^6 A_T^{(r)}\beta_{T,W}^{(r)}.\]

The only remaining case is a level-2 positive-shape node, which must have shape $(1,1,2)$, $(2,1,1)$, or $(1,2,1)$. For a node $T$ with shape $(1,1,2)$, recall that $\mu_T^{} \in [0, 1/2]$ is the only optimizable parameter associated with $T$. We let $\delta_{a,b}$ denote the point mass at $(a, b) \in \BK{0,1,2}^2$ and set
\[
  \beta_{T,X}^{} = \beta_{T,Y}^{} = \frac{1}{2}\delta_{0,1}+\frac{1}{2}\delta_{1,0},
  \qquad
  \beta_{T,Z}^{}=\mu_T^{}\delta_{0,2}+\mu_T^{}\delta_{2,0}+(1-2\mu_T^{})\delta_{1,1}. \numberthis \label{eq:def-csd-112}
\]
Similarly, for a node $T$ with shape $(2,1,1)$, we set
\[
  \beta_{T,Y}^{} = \beta_{T,Z}^{} = \frac{1}{2}\delta_{0,1}+\frac{1}{2}\delta_{1,0},
  \qquad
  \beta_{T,X}^{}=\mu_T^{}\delta_{0,2}+\mu_T^{}\delta_{2,0}+(1-2\mu_T^{})\delta_{1,1}, 
\]
and for a node $T$ with shape $(1,2,1)$, we set
\[
  \beta_{T,X}^{} = \beta_{T,Z}^{} = \frac{1}{2}\delta_{0,1}+\frac{1}{2}\delta_{1,0},
  \qquad
  \beta_{T,Y}^{}=\mu_T^{}\delta_{0,2}+\mu_T^{}\delta_{2,0}+(1-2\mu_T^{})\delta_{1,1}.
\]

\paragraph{Retained exponent at root node.}
Next, we define how to calculate the \defn{retained exponent} $E_G^{}$ associated with the root $G$, which will be obtained from intermediate quantities $E_G^{(r)}$ for regions $r \in [6]$.
We start by introducing how to compute $E^{(1)}_G$ for region $r = 1$, where $\pi_1$ is the identity permutation.
In all summations below, $s$ ranges over $\Shapes_{\ell^*}^{}$.
We will use the symbols $\is,\ip,$ and $\il$ in our notation below as mnemonic placeholders corresponding to the dimensions, indicating the meanings ``unconstrained / any value'', ``strictly positive'', and ``strictly less than the parent's coordinate'', respectively. 
We define
\[
  \eta_{G,Y}^{(1)}
  \defeq
  \sum_{s \,:\, s_Z^{}=0} \alpha_G^{(1)}(s) \cdot H\bk*{\beta_{G[s,1],Y}^{}}
  +\sum_{j=0}^{2^{\ell^*}}\alpha_G^{(1)}(\is,j,\ip) \cdot H\bk*{\bar\beta_{G,Y,\is,j,\ip}^{(1)}},
  \numberthis \label{eq:global-eta-Y}
\]
where
\[
  \alpha_G^{(1)}(\is,j,\ip)
  \defeq
  \sum_{s \,:\, s_Y^{}=j,\; s_Z^{}>0}\alpha_G^{(1)}(s),
  \qquad
  \bar\beta_{G,Y,\is,j,\ip}^{(1)}
  \defeq
  \frac{1}{\alpha_G^{(1)}(\is,j,\ip)}
  \sum_{s \,:\, s_Y^{}=j,\; s_Z^{} > 0} \alpha_G^{(1)}(s) \cdot \beta_{G[s,1],Y}^{}.
\]
We then define
\[
  \eta_{G,Z}^{(1)}
  \defeq
  \sum_{s \,:\, s_X^{}=0\ \mathrm{or}\ s_Y^{}=0} \alpha_G^{(1)}(s) \cdot H\bk*{\beta_{G[s,1],Z}^{}}
  +\sum_{k=0}^{2^{\ell^*}}\alpha_G^{(1)}(\ip,\ip,k) \cdot H\bk*{\bar\beta_{G,Z,\ip,\ip,k}^{(1)}},
  \numberthis \label{eq:global-eta-Z}
\]
where
\[
  \alpha_G^{(1)}(\ip,\ip,k)
  \defeq
  \sum_{s \,:\, s_X^{}>0,\; s_Y^{}>0,\; s_Z^{}=k}\alpha_G^{(1)}(s),
  \qquad
  \bar\beta_{G,Z,\ip,\ip,k}^{(1)}
  \defeq
  \frac{1}{\alpha_G^{(1)}(\ip,\ip,k)}
  \sum_{s \,:\, s_X^{}>0,\; s_Y^{}>0,\; s_Z^{}=k} \alpha_G^{(1)}(s) \cdot \beta_{G[s,1],Z}^{}.
\]
To avoid the issue of denominators $\alpha_G^{(1)}(\cdot)$ being zero, we regard $0 \cdot \textsf{undefined} \defeq 0$, so that \Cref{eq:global-eta-Y} and \Cref{eq:global-eta-Z} are always well-defined.

The quantity $E_G^{(1)}$ is then given by
\[
  E_G^{(1)}
  \defeq
  \min\BK*{
  H\bk*{\bk[\big]{\alpha_G^{(1)}}_X^{}} - P_{\Shapes_{\ell^*}}\bk*{\alpha_G^{(1)}},\;
  H\bk*{\bar\beta_{G,Y,\is,\is,\is}^{(1)}} - \eta_{G,Y}^{(1)},\;
  H\bk*{\bar\beta_{G,Z,\is,\is,\is}^{(1)}} - \eta_{G,Z}^{(1)}
  },
  \numberthis \label{eq:global-E-region-1}
\]
where
%\[P_{\Shapes_{\ell^*}}\bk*{\alpha_G^{(1)}} \defeq H_{\Shapes_{\ell^*}}^{\max}\bk*{\alpha_G^{(1)}} - H\bk*{\alpha_G^{(1)}},\] and
\[
  \bar\beta_{G,W,\is,\is,\is}^{(1)}
  \defeq
  \sum_{s \in \Shapes_{\ell^*}^{}} \alpha_G^{(1)}(s) \cdot \beta_{G[s,1],W}^{},
  \qquad \forall\, W \in \{X,Y,Z\}.
\]

For $r = 2, \dots, 6$, we define $E_G^{(r)}$ by applying the same formula \Cref{eq:global-E-region-1} after relabeling the coordinates $X,Y,Z$ as $\pi_r^{}(X),\pi_r^{}(Y),\pi_r^{}(Z)$, and replacing the region index $1$ with $r$. Finally, we set
\[
  E_G^{} \defeq \sum_{r=1}^6 A_G^{(r)} \cdot E_G^{(r)}.
\]

\paragraph{Retained exponents at level $\ell \ge 3$.}
For each positive-shape node $T$ at level $\ell \ge 3$, we will calculate intermediate quantities $E_{T,W}^{(r)}$ for each {region} $r \in [6]$ and $W \in \BK{X,Y,Z}$. These quantities will later be aggregated to form the \emph{retained exponent} of level $\ell$.
% For each \defn{region} $r \in [6]$ and dimension $W \in \BK{X,Y,Z}$, we will calculate a quantity $E_{T,W}^{(r)}$.
Note that the region index $r$ in the calculation can be different from the region index $r_{T}^{}$ of node $T$ itself. As in the above, we start by defining the quantities for region $r = 1$ where $\pi_1^{}$ is the identity permutation. In the summations below, $u$ ranges over $\Split(s_T^{})$.

% Copyediting note: When we write s_T - u or other formulas representing the complement shape, use s_T \!-\! u for reduced spacing, except in subscripts.

We first define
\[
  \begin{aligned}
    \eta_{T,Y}^{(1)}
     & \defeq
    \sum_{u \,:\, u_Z^{} = 0}
    \Bigl(\alpha_T^{(1)}(u) + \alpha_T^{(1)}(s_T^{} \!-\! u)\Bigr) \cdot
    H\bk*{\beta_{T[u,1],Y}^{}} \\
     & \quad+
    \sum_{j=0}^{\min\BK{s_{T,Y}^{},\, 2^{\ell - 1}}}
    \Bigl(
    \alpha_T^{(1)}(\is,j,\ip)
    + \alpha_T^{(1)}(\is,\, s_{T,Y}^{} \!-\! j,\, \il)
    \Bigr) \cdot
    H\bk*{\bar\beta_{T,Y,\is,j,\ip}^{(1)}},
  \end{aligned}
  \numberthis \label{eq:positive-shape-eta-Y}
\]
where
\[
  \begin{aligned}
    \alpha_T^{(1)}(\is,j,\ip)
    \defeq
    \sum_{u \,:\, u_Y^{} = j,\; u_Z^{} > 0}
    \alpha_T^{(1)}(u),
    \qquad
    \alpha_T^{(1)}(\is,\, s_{T,Y}^{} \!-\! j,\, \il)
    \defeq
    \sum_{\substack{u \,:\, u_Y^{} = s_{T,Y}^{} - j, \\
    u_Z^{} < s_{T,Z}^{}}}
    \alpha_T^{(1)}(u),
  \end{aligned}
\]
\[
  \bar\beta_{T,Y,\is,j,\ip}^{(1)}
  \defeq
  \frac{1}{\alpha_T^{(1)}(\is,j,\ip) + \alpha_T^{(1)}(\is,\, s_{T,Y}^{} \!-\! j,\, \il)}
  \cdot
  \sum_{u \,:\, u_Y^{} = j,\; u_Z^{} > 0}
  \Bigl(\alpha_T^{(1)}(u) + \alpha_T^{(1)}(s_T^{} \!-\! u)\Bigr) \cdot \beta_{T[u,1],Y}^{}.
\]
We also define
\[
  \begin{aligned}
    \eta_{T,Z}^{(1)}
     & \defeq
    \sum_{u \,:\, u_X^{} = 0\ \mathrm{or}\ u_Y^{} = 0}
    \Bigl(\alpha_T^{(1)}(u) + \alpha_T^{(1)}(s_T^{} - u)\Bigr) \cdot
    H\bk*{\beta_{T[u,1],Z}^{}} \\
     & \quad+
    \sum_{k=0}^{\min\BK{s_{T,Z}^{},\, 2^{\ell-1}}}
    \Bigl(
    \alpha_T^{(1)}(\ip,\ip,k)
    + \alpha_T^{(1)}(\il,\, \il,\, s_{T,Z}^{} \!-\! k)
    \Bigr) \cdot
    H\bk*{\bar\beta_{T,Z,\ip,\ip,k}^{(1)}},
  \end{aligned}
  \numberthis \label{eq:positive-shape-eta-Z}
\]
where
\[
  \begin{aligned}
    \alpha_T^{(1)}(\ip,\ip,k)
    \defeq
    \sum_{u \,:\, u_X^{} > 0,\; u_Y^{} > 0,\; u_Z^{} = k}
    \alpha_T^{(1)}(u),
    \qquad
    \alpha_T^{(1)}(\il,\, \il,\, s_{T,Z}^{} \!-\! k)
    \defeq
    \sum_{\substack{u \,:\, u_X^{} < s_{T,X}^{},\; u_Y^{} < s_{T,Y}^{}, \\
    u_Z^{} = s_{T,Z}^{} - k}}
    \alpha_T^{(1)}(u),
  \end{aligned}
\]
\[
  \bar\beta_{T,Z,\ip,\ip,k}^{(1)}
  \defeq
  \frac{1}{\alpha_T^{(1)}(\ip,\ip,k) + \alpha_T^{(1)}(\il,\, \il,\, s_{T,Z}^{} \!-\! k)}
  \cdot
  \sum_{u \,:\, u_X^{} > 0,\; u_Y^{} > 0,\; u_Z^{} = k}
  \Bigl(\alpha_T^{(1)}(u) + \alpha_T^{(1)}(s_T^{} \!-\! u)\Bigr) \cdot \beta_{T[u,1],Z}^{}.
\]
Again, as we define $0 \cdot \textsf{undefined} \defeq 0$, \Cref{eq:positive-shape-eta-Y} and \Cref{eq:positive-shape-eta-Z} are well-defined even when some denominators $\alpha_T^{(1)}$ are zeros.

Then, $E^{(1)}_{T,W}$ are given by
\[
  \begin{aligned}
    E_{T,X}^{(1)}
     & \defeq
    m_T^{} A_T^{(1)} \cdot
    \bk[\Big]{
      H\bk*{\bk[\big]{\alpha_{T}^{(1)}}_{X}^{}}
      - P_{\Split(s_T^{})}^{}\bk*{\alpha_T^{(1)}}
    },        \\
    E_{T,Y}^{(1)}
     & \defeq
    m_T^{} A_T^{(1)} \cdot
    \bk[\Big]{
      H\bk*{\beta_{T,Y}^{(1)}}
      - \eta_{T,Y}^{(1)}
    },        \\
    E_{T,Z}^{(1)}
     & \defeq
    m_T^{} A_T^{(1)} \cdot
    \bk[\Big]{
      H\bk*{\beta_{T,Z}^{(1)}}
      - \eta_{T,Z}^{(1)}
    }.
  \end{aligned}
  \numberthis \label{eq:E-positive-shape}
\]
% where $\gamma_{T,W}^{(1)}$ for $W \in \{X,Y,Z\}$ is a distribution over $\BK{ (a, b) \in \N^{2} \mid a + b = s_{T,W}^{}}$ given by
% \[
%   \gamma_{T,W}^{(1)}(a,b)
%   \defeq
%   \sum_{u \,:\, u_W^{} = a,\; \bk{s_T^{} - u}_W^{} = b}
%     \alpha_T^{(1)}(u),
% \]
% and
% where
% \[
%   P_{\Split(s_T^{})}^{}\bk*{\alpha_T^{(1)}}
%   \defeq
%   H_{\Split(s_T^{})}^{\max}\bk*{\alpha_T^{(1)}}
%   - H\bk*{\alpha_T^{(1)}}.
% \]

For regions $r = 2, \dots, 6$, we define $E_{T,\pi_r^{}(X)}^{(r)}$, $E_{T,\pi_r^{}(Y)}^{(r)}$, and $E_{T,\pi_r^{}(Z)}^{(r)}$ by applying
\Cref{eq:positive-shape-eta-Y}, \Cref{eq:positive-shape-eta-Z}, and \Cref{eq:E-positive-shape} after relabeling coordinates $X,Y,Z$ as $\pi_r^{}(X),\pi_r^{}(Y),\pi_r^{}(Z)$ and replacing region index $1$ by $r$.

To calculate the \emph{retained exponent} for level $\ell \in [3, \ell^*]$, we let $\mathcal T_\ell^+$ denote the set of positive-shape nodes at level $\ell$. Then, the retained exponent at level $\ell$ is given by
\[
  E_\ell^{}
  \defeq
  \sum_{r = 1}^6
  \min\BK*{
    \sum_{T \in \mathcal T_\ell^+} E_{T,X}^{(r)}, \;
    \sum_{T \in \mathcal T_\ell^+} E_{T,Y}^{(r)}, \;
    \sum_{T \in \mathcal T_\ell^+} E_{T,Z}^{(r)}
  }.
\]

\paragraph{Retained exponent at level 2.}

For a positive level-$2$ node $T$ of shape $(1,1,2)$, we define
\[
  \bk*{E_{T,X}^{}, E_{T,Y}^{}, E_{T,Z}^{}}
  \defeq
  m_T^{} \cdot
  \bk*{1,\; 1,\; H\bk*{\mu_T^{}, \mu_T^{}, 1 - 2\mu_T^{}}}.
  \numberthis \label{eq:E-112}
\]
% $E_{T,X}^{}, E_{T,Y}^{}, E_{T,Z}^{}$ for nodes $T$ with shapes $(2,1,1)$ and $(1,2,1)$ are obtained from the right-hand side of \Cref{eq:E-112} by cyclically shifting to the right-hand side once and twice, respectively.
Similarly, for a node $T$ of shape $(2,1,1)$, we define
\[
  \bk*{E_{T,X}^{}, E_{T,Y}^{}, E_{T,Z}^{}}
  \defeq
  m_T^{} \cdot
  \bk*{H\bk*{\mu_T^{}, \mu_T^{}, 1 - 2\mu_T^{}},\; 1,\; 1},
\]
and for a node $T$ of shape $(1,2,1)$, we define
\[
  \bk*{E_{T,X}^{}, E_{T,Y}^{}, E_{T,Z}^{}}
  \defeq
  m_T^{} \cdot
  \bk*{1,\; H\bk*{\mu_T^{}, \mu_T^{}, 1 - 2\mu_T^{}},\;  1},
\]
Then, letting $\mathcal T_2^+$ denote the set of positive-shape nodes at level $2$, the \emph{retained exponent} at level 2 is given by
\[
  E_2^{}
  \defeq
  \min\BK*{
    \sum_{T \in \mathcal T_2^+} E_{T,X}^{}, \;
    \sum_{T \in \mathcal T_2^+} E_{T,Y}^{}, \;
    \sum_{T \in \mathcal T_2^+} E_{T,Z}^{}
  }.
\]

\paragraph{Local matrix size for zero-shape nodes.}

For every zero-shape node $T$, we let $W_0^{} \in \BK{X,Y,Z}$ be the first zero coordinate of $s_T^{}$ and $W_1^{}$ be the first nonzero coordinate of $s_T^{}$.
% and $W'_0 \defeq (Z \to Y \to X \to Z)(W_0)$ be the dimension \emph{before} $W_0$ in the cyclic order.
Then, we define the \defn{local matrix size} of $T$, written $(M_{T,X}^{}, M_{T,Y}^{}, M_{T,Z}^{})$, as
\[
  M_{T, W_0}^{} \defeq
  m_T^{} \cdot \bk[\Bigg]{
  H\bk*{\beta_{T,W_1}^{}}
  + \sum_{L \in \supp(\beta_{T,W_1}^{})}
  \beta_{T,W_1}^{}(L) \cdot
  \left| \myset* {p \in [2^{\ell - 1}]}{L_p = 1} \right|
  \cdot \log q};
\]
the other two coordinates in the local matrix size are zeros.

\paragraph{Local matrix size for $(1,1,2)$-nodes.}
For a node $T$ with shape $(1,1,2)$, we define its local matrix size as
\[
  \bk*{M_{T,X}^{}, M_{T,Y}^{}, M_{T,Z}^{}}
  \defeq
  m_T^{} \cdot \bk*{
    (1 - 2\mu_T^{})\log q, \;
    (1 - 2\mu_T^{})\log q, \;
    2\mu_T^{}\log q
  }.
  \numberthis \label{eq:msize-112}
\]
% The local matrix size for nodes with shapes $(2,1,1)$ and $(1,2,1)$ are obtained by applying the cyclic coordinate permutation
% $X \to Y \to Z \to X$ to \Cref{eq:msize-112} once and twice, respectively.
Similarly, for a node $T$ with shape $(2,1,1)$, we define its local matrix size as
\[
  \bk*{M_{T,X}^{}, M_{T,Y}^{}, M_{T,Z}^{}}
  \defeq
  m_T^{} \cdot \bk*{
    2\mu_T^{}\log q, \;
    (1 - 2\mu_T^{})\log q, \;
    (1 - 2\mu_T^{})\log q
  },
\]
and for a node $T$ with shape $(1,2,1)$, we define its local matrix size as
\[
  \bk*{M_{T,X}^{}, M_{T,Y}^{}, M_{T,Z}^{}}
  \defeq
  m_T^{} \cdot \bk*{
    (1 - 2\mu_T^{})\log q, \;
    2\mu_T^{}\log q, \;
    (1 - 2\mu_T^{})\log q
  }.
\]

\subsection{Final assembly}
Finally, we define the \defn{total retained exponent} as
\[
  E_{\text{total}}^{} \defeq E_G^{} + E_2^{} + \sum_{\ell=3}^{\ell^*} E_\ell^{},
\]
and define the \defn{total matrix size} as
\[
  M_{\text{total}}^{} \defeq \min\BK*{\sum_{T \in \mathcal{L}} M_{T,X}^{}, \; \sum_{T \in \mathcal{L}} M_{T,Y}^{}, \; \sum_{T \in \mathcal{L}} M_{T,Z}^{}},
\]
where $\mathcal{L}$ is the set of leaves. We can now write the full optimization problem.

\begin{tcolorbox}[
    colback=white, % Background color
    colframe=gray!50!black, % Border color
    sharp corners, 
    boxrule=0.5pt,
    left=5pt, right=5pt, top=5pt, bottom=5pt
]
\begin{equation}
  \begin{array}{ll}
    \text{minimize} & \Omega                                                                                       \\[0.2em]
    \text{subject to}
                    & E_{\text{total}}^{} + M_{\text{total}}^{} \cdot \Omega \ge 2^{\ell^*-1} \log(q + 2),               \\
                    & \text{all free variables lie in the domains stated in \cref{subsec:optimizable-parameters}.}
  \end{array}
  \label{eq:optimization-problem}
\end{equation}
\end{tcolorbox}

\begin{theorem}[\cite{alman2025more}]
Any feasible solution of \Cref{eq:optimization-problem} implies $\omega \le \Omega$, where $\omega$ is the asymptotic matrix multiplication exponent.
\end{theorem}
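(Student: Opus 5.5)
This theorem is imported from \cite{alman2025more}. Within this note the natural proof is a reduction to their main analysis: check that \Cref{eq:optimization-problem}, with the tree of \cref{subsec:recursive-tree-structure} and the derived quantities of \cref{subsec:derived-quantities}, is their combination-loss program specialized to $CW_q^{\otimes 2^{\ell^*}}$ and maximum recursion level $\ell^*$. Their theorem then applies verbatim. If I had to reconstruct the argument itself, I would follow the standard laser-method template in four steps.

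\textbf{Step 1: degenerate the big tensor.} Take $N$ large and consider $CW_q^{\otimes 2^{\ell^*} N}$. Using the root parameters $A_G$ and $\alpha_G^{(r)}$, I would degenerate it via hashing and zeroing-out into a direct sum of about $2^{N E_G}$ independent copies of products of level-$\ell^*$ constituent tensors, one block per region. The region permutation $\pi_r$ fixes which dimension is hashed first. The three terms in \Cref{eq:global-E-region-1} are the usual three counting bounds: the $X$-marginal entropy minus the penalty $P$, and the $Y$ and $Z$ bounds that account for the combination loss through $\eta$. These require that the complete split distributions $\beta_{G[s,r],W}$ be realizable by the subtrees; that realizability is itself the recursive content of the construction.

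\textbf{Step 2: recurse down the tree.} For each positive-shape node at level $\ell\ge 3$, apply the same procedure to split level-$\ell$ shapes into pairs $(u,s_T-u)$. Summing node by node, weighted by the mass $m_T$, gives the retained exponents $E_\ell$. The $\min$ over dimensions is taken after summing over $\mathcal T_\ell^+$, because all nodes of a level are hashed simultaneously with shared hash functions. At level 2, the $(1,1,2)$-type nodes are degenerated to matrix products of the stated shapes, with exponents \Cref{eq:E-112} and sizes \Cref{eq:msize-112}. Each zero-shape node is already a matrix multiplication tensor whose size is $M_T$.

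\textbf{Step 3: assemble and conclude.} The end result is roughly $2^{N E_{\text{total}}}$ independent copies of a matrix multiplication tensor with volume at least $2^{N M_{\text{total}}}$. Combining this with the border rank $q+2$ of $CW_q$ via Schönhage's asymptotic sum inequality gives
\[
E_{\text{total}} + \frac{\omega}{3}\cdot 3M_{\text{total}} \le 2^{\ell^*-1}\log(q+2),
\]
up to the paper's normalization. The constant $2^{\ell^*-1}$, rather than $2^{\ell^*}$, presumably reflects how the leaves' sizes are normalized. Any feasible $\Omega$ makes $E_{\text{total}} + M_{\text{total}}\Omega \ge 2^{\ell^*-1}\log(q+2)$, and comparing the two inequalities gives $\omega \le \Omega$. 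The $\min$ over dimensions in $M_{\text{total}}$ is handled by symmetrizing over the three rotations of the tensor.

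\textbf{Main obstacle.} The hard step is justifying the combination-loss terms $\eta$. These require showing that when a block is merged from several shapes sharing a marginal, the loss is exactly the entropy of the averaged split distribution $\bar\beta$ rather than the naive sum. This is the technical core of \cite{duan2022faster,alman2025more}. I would not reprove it; I would cite it, and verify only that our definitions (including the conventions $0\cdot\textsf{undefined}=0$ and $\beta^\vee$) agree with theirs.
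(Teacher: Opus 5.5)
Your approach matches the paper's: the note gives no proof of its own and imports the theorem from \cite{alman2025more}, so citing their result after checking that \Cref{eq:optimization-problem} is their program instantiated at $q$ and $\ell^*$ is exactly what is done here. One correction to your reconstruction: the right-hand side $2^{\ell^*-1}\log(q+2)$ is not a leaf-normalization artifact but the log border rank of the level-$\ell^*$ tensor, which is $CW_q^{\otimes 2^{\ell^*-1}}$ (each $CW_q$ factor contributes $2$ to the shape sum $2^{\ell^*}$, matching the length-$2^{\ell-1}$ split vectors), so in Step~1 you should degenerate $CW_q^{\otimes 2^{\ell^*-1}N}$ rather than $CW_q^{\otimes 2^{\ell^*}N}$.
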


\subsection{Dealing with maximum entropies}
\label{subsec:max-entropy-certification}
Among the quantities introduced above, $H_D^{\max}(\rho)$ is the only type that cannot be computed analytically. However, it is easy to see that replacing each occurrence of $H_D^{\max}$ in the computation of $E_{\text{total}}^{}$ with its upper bound can only decrease $E_{\text{total}}^{}$, so any solution that satisfies the constraints in \Cref{eq:optimization-problem} after this replacement is feasible for \Cref{eq:optimization-problem} itself, and still yields a valid upper bound $\omega \le \Omega$. In this section, we describe how to derive an upper bound of $H_D^{\max}(\rho)$.

For a set $D$ of shapes and a probability distribution $\rho \in \Delta(D)$, a \defn{valid certificate} for $H_D^{\max}(\rho)$ consists of:
\begin{enumerate}
  \item a distribution $y \in \Delta(D)$ satisfying $y(a) > 0$ for every $a \in D$ and $y_W^{} = \rho_W^{}$ for every $W \in \BK{X, Y, Z}$;
  \item a real number $\lambda_0^{}$ and, for each $W \in \BK{X, Y, Z}$ and each marginal value $w \in \myset{a_W^{}}{a \in D}$, a real number $\lambda_W^{}(w)$ (these values play the role of Lagrange multipliers for the maximization defining $H_D^{\max}(\rho)$); and
  \item a real number $\eps \ge 0$ satisfying,
  for every $a = (a_X^{}, a_Y^{}, a_Z^{}) \in D$,
        \[
          \absolute [\big]{\log y(a) - \bk[\big]{\lambda_0^{} + \lambda_X^{}(a_X^{}) + \lambda_Y^{}(a_Y^{}) + \lambda_Z^{}(a_Z^{})}}
          \le \eps. \numberthis \label{eq:certificate-residual}
        \]
\end{enumerate}

\begin{lemma}
  \label{lem:max-entropy-upper-bound}
  For any valid certificate, we have $H(y) \le H_D^{\max}(\rho) \le H(y) + 2 \eps$.
\end{lemma}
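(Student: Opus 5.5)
The lower bound $H(y) \le H_D^{\max}(\rho)$ is immediate from the definition \Cref{eq:def-hmax}. Condition~1 of a valid certificate says that $y \in \Delta(D)$ and that $y$ has the same $X$, $Y$ and $Z$ marginals as $\rho$. So $y$ is feasible for the supremum, and its entropy is at most the supremum.

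For the upper bound, take any feasible $\rho' \in \Delta(D)$ with $\rho'_W = \rho_W$ for all $W$. I will use the Gibbs inequality (nonnegativity of KL divergence). Since $y(a) > 0$ for every $a \in D$, the divergence $\mathrm{KL}(\rho' \,\|\, y)$ is finite and nonnegative. This gives
\[
H(\rho') \le -\sum_{a \in D} \rho'(a) \log y(a).
\]
Define the affine potential $\phi(a) \defeq \lambda_0 + \lambda_X(a_X) + \lambda_Y(a_Y) + \lambda_Z(a_Z)$. By \Cref{eq:certificate-residual}, $-\log y(a) \le -\phi(a) + \eps$ for every $a \in D$. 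Hence
\[
H(\rho') \le -\sum_a \rho'(a)\phi(a) + \eps .
\]

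The key observation is that $\sum_a \rho'(a)\phi(a)$ depends on $\rho'$ only through its total mass and its three marginals. Explicitly,
\[
\sum_a \rho'(a)\phi(a) = \lambda_0 + \sum_{W} \sum_w \rho'_W(w)\lambda_W(w).
\]
The marginals of $\rho'$ coincide with those of $\rho$, which in turn coincide with those of $y$. Therefore this sum equals $\sum_a y(a)\phi(a)$.

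Next I apply the residual bound in the other direction: $\phi(a) \ge \log y(a) - \eps$. Weighting by $y$ and summing gives
\[
-\sum_a y(a)\phi(a) \le -\sum_a y(a)\log y(a) + \eps = H(y) + \eps .
\]
Combining the two estimates yields $H(\rho') \le H(y) + 2\eps$ for every feasible $\rho'$. Taking the supremum over $\rho'$ gives the claimed upper bound.

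No step here is a real obstacle. The only points needing care are, first, that the support condition $y(a) > 0$ is what makes the Gibbs/cross-entropy step valid for arbitrary $\rho'$ (including $\rho'$ with zero entries, where the convention $0 \log 0 = 0$ applies). Second, the marginal-matching identity must correctly handle every marginal value $w$ appearing in $D$; it does, because $\lambda_W(w)$ is defined for every such value.
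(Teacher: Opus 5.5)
Your proof is correct and is essentially the paper's argument. The Gibbs step $H(\rho') \le -\sum_a \rho'(a)\log y(a)$ is exactly the tangent-line inequality $H(\rho') \le H(y) + \langle \nabla H(y), \rho' - y\rangle$ that the paper gets from concavity, and the marginal-matching cancellation of the affine potential is the same; the only cosmetic difference is that you bound the residual term $\langle \phi - \log y, \rho' - y\rangle$ as $\eps + \eps$ by handling the $\rho'$ and $y$ parts separately, whereas the paper bounds it by $\eps\|\rho' - y\|_1 \le 2\eps$.
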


\begin{proof}
The left inequality follows from the definition: $y$ has the same marginals as $\rho$, so it is feasible for the maximization defining $H_D^{\max}(\rho)$ (see \Cref{eq:def-hmax}).

For the right inequality, fix any $\rho' \in \Delta(D)$ with $\rho'_W = \rho_W^{}$ for all $W \in \BK{X, Y, Z}$; to prove the right inequality we need only show $H(\rho') \le H(y) + 2 \eps$.

The entropy function $H$ is concave on $\Delta(D)$, and since $y$ is strictly positive, $H$ is differentiable at $y$ with $\nabla H(y)_a = -\log y(a) - \log e$ (recall that all logarithms are in base 2). Therefore
  \[
    H(\rho') \le H(y) + \angbk[\big]{\nabla H(y), \, \rho' - y}.
  \]
  Write $g(a) \defeq \lambda_0^{} + \lambda_X^{}(a_X^{}) + \lambda_Y^{}(a_Y^{}) + \lambda_Z^{}(a_Z^{})$, and view $g$, $\log y$, and the constant $\log e$ as vectors indexed by $a \in D$.
  Then,
  \begin{align*}
    \angbk[\big]{g + \log e, \, \rho' - y}
     & = \bk[\big]{\lambda_0^{} + \log e} \cdot \sum_{a \in D} \bk[\big]{\rho'(a) - y(a)}
    + \sum_{W \in \BK{X, Y, Z}} \, \sum_{a \in D} \lambda_W^{}(a_W^{}) \cdot \bk[\big]{\rho'(a) - y(a)} \\
     & = \bk[\big]{\lambda_0^{} + \log e} \cdot \bk{1 - 1}
    + \sum_{W \in \BK{X, Y, Z}} \, \sum_{w \in \BK{a_W^{} \mid a \in D}} \lambda_W^{}(w) \cdot \bk[\big]{\rho'_W(w) - y_W^{}(w)}
    = 0,
  \end{align*}
  where the last step uses $\rho'_W = \rho_W^{} = y_W^{}$ for each $W \in \BK{X, Y, Z}$. Hence,
  \[
    \angbk[\big]{\nabla H(y), \, \rho' - y}
    = \angbk[\big]{g - \log y, \, \rho' - y}
    \le \eps \cdot \norm{\rho' - y}_1
    \le 2 \eps,
  \]
  where the first inequality uses \Cref{eq:certificate-residual} and the second one uses the fact that the $\ell_1$-distance between two probability distributions is at most $2$. Thus, $H(\rho') \le H(y) + 2 \eps$, proving the lemma.
\end{proof}

\section{Solving the optimization problem numerically}
\label{app:subsubsec:omega_optimization}

We wish to numerically minimize $\Omega$ from \Cref{eq:optimization-problem} while satisfying all the constraints, with $q=5$ and $\ell^*=4$. In \cite{alman2025more}, this was done (for $\ell^*=3$) via a sequential quadratic programming (SQP) algorithm using the software package SNOPT \citep{gill2022snopt}. In contrast, we take a gradient-based approach to tackle the non-convex minimization. We develop an algorithm applying several techniques from machine learning and adjacent areas, such as optimal transport.

\subsection{Differentiable objective}

To apply a gradient-based optimization algorithm, we need a differentiable objective function to be optimized. However, obtaining a differentiable objective from \Cref{eq:optimization-problem} is challenging due to the problem constraints.
Since most of the parameters to be optimized are distributions, this creates the constraints that the probabilities must be non-negative and must add up to one. To sidestep that issue and obtain unconstrained optimization, we parameterize distributions in terms of their logits, and obtain the distribution by applying the softmax operation on the free parameters. We initialize the algorithm with random (logit) parameters.

The other relevant set of constraints comes from the maximum entropy distributions. Given an input probability distribution, we must find the distribution with maximum entropy that shares the same marginals as the input. This is reminiscent of a problem that typically arises in optimal transport, where the Sinkhorn-Knopp algorithm \citep{sinkhorn1967concerning} is used to obtain a stable and differentiable objective \citep{cuturi2013sinkhorn}. Thus, unlike \cite{alman2025more}, we do not treat the maximum entropy distributions (and the corresponding Lagrange multipliers) as free parameters to be optimized together with the rest of distributions; rather, we obtain the maximum entropy distributions (and the Lagrange multipliers) using the Sinkhorn-Knopp algorithm.

We compute the gradients of the objective using automatic differentiation. To improve stability, we employ implicit differentiation for backpropagating through the Sinkhorn-Knopp algorithm \citep{cuturi2020supervised,eisenberger2022unified}. We update the parameters using Adam \citep{kingma2015adam}.

\subsection{Software implementation}

To obtain a fast algorithm that can leverage hardware platforms (such as GPUs) and scale up to level $\ell^*=4$ (with nearly 7 million parameters to optimize) we implement our algorithm in Jax \citep{jax2018github}, applying some techniques to improve efficiency.
The key to an effective speed-up is finding a new representation for the problem, switching from a loop over graph nodes (as done in the SQP implementation from \cite{alman2025more}) to a naturally parallelizable computation over tensors. In the implementation from \cite{alman2025more}, the free parameters are stored in the nodes of a graph where updates happen through message-passing (both from child nodes to parent nodes and vice versa). The graph is non-uniform enough (e.g., different nodes have different number of children and different types of them) to make parallelization hard, but we solve this via two techniques. Firstly, we introduce phantom nodes in the graph with masking; as a trade-off, we pay the cost of increasing the number of optimization parameters by up to a factor of $3$. Secondly, we cluster the nodes into a limited number of highly specialized groups (``stages''). With these techniques, we are able to represent the full graph with multi-dimensional tensors, enabling parallel processing over a number of axes (up to $10$), challenging the limits of tensor-processing backends.

\subsection{Applying \method}

We use \method \citep{novikov2025alphaevolve} to further improve the optimization algorithm. Specifically, we let \method modify the optimization program, which is then executed (taking approximately 5 hours on a single GPU) to output a bound on omega. \method then evolves the code to minimize omega. We found improved results by using \method's ``evolving constructions'' feature, where the optimization algorithm at each generation starts at the best solution point found by the parent algorithm.

\section{Rigorous verification of the omega upper bound}

To rigorously certify our omega bound, we run a separate verification step computing all quantities in rational arithmetic to guard against floating point errors. More specifically, we round the floating point solution obtained at the end of the optimization to rational numbers, ensuring the maximum-entropy certificates stay valid. We then evaluate all derived quantities in exact rational arithmetic, and replace each logarithm with a rational bound rounded in the proper direction that ensures each constraint in \Cref{eq:optimization-problem} is satisfied, so the certified bounds are free from numerical errors. We are preparing a repository in which we will release the verification code and our discovered solution.

\section{Discussion}

In this note, we improve the SOTA upper bound on $\omega$ to 2.371177. 
Our improvement is comparable in magnitude to most improvements in the last 40 years since $\omega < 2.376$ was attained by \cite{coppersmith1990matrix}. 
We achieve this by leveraging modern optimization techniques and \method to design a better optimization algorithm for the problem described by \cite{alman2025more}.
While further modest improvements may be obtained in this manner, achieving larger improvements to $\omega$ likely requires new mathematical ideas and is an exciting area of research.

\paragraph{Acknowledgments.} The work presented in this note was motivated by initial conversations during the Complexity and Linear Algebra program at Simons Institute for the Theory of Computing in Fall 2025.

\newpage

\bibliography{main}

\end{document}